\newcommand{\keywords}[1]{\par\addvspace\baselineskip
\noindent\keywordname\enspace\ignorespaces#1}
\begin{document}
\mainmatter 
\title{Majority Model on Random Regular Graphs}
\titlerunning{Majority Model on Random Regular Graphs}
\author{Bernd G\"artner \and Ahad N. Zehmakan}
\authorrunning{B. G\"artner \and A. N. Zehmakan}
\institute{Department of Computer Science, ETH Zurich,\\
 Switzerland\\
\mailsa\\
\url{}}
\toctitle{Majority model on random regular graphs}
\tocauthor{Bernd~G\"artner and Ahad~N. Zehmakan}
\maketitle
\begin{abstract}
Consider a graph $G=(V,E)$ and an initial random coloring where each vertex $v \in V$ is blue with probability $P_b$ and red otherwise, independently from all other vertices. In each round, all vertices simultaneously switch their color to the most frequent color in their neighborhood and in case of a tie, a vertex keeps its current color. The main goal of the present paper is to analyze the behavior of this basic and natural process on the random $d$-regular graph $\mathbb{G}_{n,d}$. It is shown that for all $\epsilon>0$, $P_b \le 1/2-\epsilon$ results in final complete occupancy by red in $\mathcal{O}(\log_d\log n)$ rounds with high probability, provided that $d\geq c/\epsilon^2$ for a suitable constant $c$. Furthermore, we show that with high probability, $\mathbb{G}_{n,d}$ is immune; i.e., the smallest dynamic monopoly is of linear size. A dynamic monopoly is a subset of vertices that can ``take over'' in the sense that a commonly chosen initial color eventually spreads throughout the whole graph, irrespective of the colors of other vertices. This answers an open question of Peleg~\cite{peleg2014immunity}.
\keywords{majority model, random regular graph, bootstrap percolation, density classification, threshold behavior, dynamic monopoly}
\end{abstract}
\section{Introduction}
\label{Introduction}
Consider a graph $G=(V,E)$ with an initial coloring where each vertex is red or blue. Each red/blue vertex could correspond to an infected/uninfected cell in a brain, a burning/non-burning tree in a forest, a positive/negative individual in a community regarding a reform proposal, or an informed/uninformed processor in a distributed system. Starting from an initial coloring, and in discrete-time rounds, all vertices synchronously update their current color based on a predefined rule as a function of the current coloring of their neighbors. By defining a suitable updating rule, this process can model different basic dynamic phenomena, like infection spreading among cells, fire propagation in a forest, opinion forming regarding an election in a community, or information distribution among processors. As two simple examples, a tree starts burning if at least one of its neighbors is on fire, or a person adopts the most frequent opinion among his/her friends.

Researchers from a wide spectrum of fields, from biology to physics, and with various motivations, have extensively investigated the behavior of such processes. One of the most natural updating rules, whose different variants have attracted a substantial amount of attention, is the \emph{majority rule} where a vertex updates its current color to the most frequent color in its neighborhood.

Here, one of the most studied variants is \emph{majority bootstrap percolation} in which by starting from a random coloring, where each vertex is blue with probability $P_b$ and red with probability $P_r=1-P_b$ independently, in each round a blue vertex switches to red if at least half of its neighbors are red, and a red vertex stays red forever. A considerable amount of effort has been put into the investigation and analysis of majority bootstrap percolation, both theoretically and experimentally, from results by Balogh, Bollobas, and Morris~\cite{balogh2009majority} to the recent paper by Stefansson and Vallier~\cite{stefansson2015majority}. Typical graphs are the $d$-dimensional lattice, the $d$-dimensional hypercube, the binomial random graph, and the random regular graph.

The main motivation behind majority bootstrap percolation is to model monotone dynamic processes like rumor spreading, where an informed individual will always stay informed of the rumor (corresponding to red color, say). However, it does not model non-monotone processes like opinion forming in a community, distributed fault-local mending, and diffusion of two competing technologies over a social network. For this, the following \emph{majority model} is considered: we are given a graph $G=(V,E)$ and an initial random coloring, where each vertex is blue with probability $P_b$ and red otherwise, independently of other vertices. In each round, all vertices simultaneously update their color to the most frequent color in their neighborhood; in case of a tie, a vertex keeps its current color. Since the majority model is a deterministic process on a finite state space, the process must reach a cycle of states after a finite number of rounds. The number of rounds that the process needs to reach the cycle is called the \emph{consensus time} of the process. 

Even though different aspects of the majority model like the consensus time and its threshold behavior have been studied both experimentally and theoretically (see Section \ref{prior works} for more details), there is not much known about the behavior of the majority model on the random $d$-regular graph. Majority bootstrap percolation~\cite{balogh2007bootstrap}, rumor spreading~\cite{Fountoulakis2010,mourrat2016phase}, and flooding process~\cite{amini2013flooding} have been studied on random regular graphs, but this graph class is not easy to handle. Even though the behavior of majority bootstrap percolation on the random regular graph had been discussed in several prior works, it took almost two decades until Balogh and Pittel~\cite{balogh2007bootstrap} could analyze the behavior of the process partially. They proved (under some limitations on the size of $d$) that there are two values $P_1$ and $P_2$ such that $P_r\ll P_1$ results in the coexistence of both colors and $P_2\ll P_r$ results in a fully red configuration with high probability\footnote{We shortly write $f(n)\ll g(n)$ for $f(n)=o(g(n))$. For a graph $G=(V,E)$, we say an event happens with high probability (w.h.p.) if its probability is at least $1-o(1)$ as a function of $|V|$.}; however, $P_1\ne P_2$ which leaves a gap in the desired threshold behavior of the process. 

In the present paper, we prove that in the majority model on the random $d$-regular graph, and for any constant $\epsilon>0$, $P_b \leq 1/2-\epsilon$ results in final complete occupancy by red color in $\mathcal{O}(\log_d\log n)$ rounds w.h.p. if $d\geq c/\epsilon^2$ for a suitable constant $c$. In words, even a narrow majority takes over the whole graph extremely fast. We should point out that the result probably holds for $d\geq 3$, but our proof techniques do not yield this, since they require sufficient edge density in the underlying graph. We also show that the upper bound of $\mathcal{O}(\log_d\log n)$ is best possible. 

A natural context of this result is the \emph{density classification problem}; coming from the theory of cellular automata, this is the problem of finding an updating rule for a given graph $G$ such that for any initial 2-coloring the process reaches a monochromatic configuration by the initial majority color. It turned out that the problem is hard in the sense that even for a cycle, there is no rule which can do the density classification task perfectly~\cite{land1995no}. Our result shows that the majority rule does the density classification task acceptably for almost every $d$-regular graph with $d$ sufficiently large (see Theorem \ref{theorem 5} for the precise meaning of acceptably and sufficiently large).  

It is an interesting (and currently unanswered question) which properties of a graph are chiefly responsible for the majority rule being able to almost classify density - or failing to do so. For example, we know that on a torus $T_{\sqrt{n},\sqrt{n}}$ (a $\sqrt{n} \times \sqrt{n}$ lattice with ``wrap-around''), already a very small initial blue density of $P_b\gg 1/n^{1/4}$ prevents red color from taking over, w.h.p.~\cite{gartner2017color}. A plausible explanation is that the torus $T_{\sqrt{n},\sqrt{n}}$ has a very low vertex/edge expansion in comparison to the random regular graph; however, we do not know whether expansion is indeed the right parameter to look at here. 


As a concrete application of our main Theorem \ref{theorem 5}, we improve a result and answer an open question by Peleg~\cite{peleg2014immunity}. Motivated by the problem of fault-local mending in distributed systems, he introduced the concept of immunity. An $n$-vertex graph $G$ is $(\alpha,\beta)$-\emph{immune} if a set of $m\leq \beta n$ vertices with a common color can take over at most $\alpha m$ vertices in the next round in the majority model. Peleg proved that there exists a $d$-regular graph that is $(\frac{c_2\log n}{d},\beta)$-immune, for suitable constants $c_1,c_2,\beta>0$ and $d\geq c_1$. He also showed that this result is tight up to a logarithmic factor. We close this logarithmic gap. Peleg also asked whether there exist regular graphs that are immune in the sense that no sub-linear size set of a common color can eventually take over the whole graph. We answer his question positively: w.h.p. the random $d$-regular graph is immune.

The outline of the paper is as follows. After presenting basic definitions and prior research in Sections \ref{notation} and \ref{prior works}, the behavior of the majority model on the random $d$-regular graph is analyzed in Section \ref{majority}; the application to immunity is presented in Section \ref{dynamo}.                
\subsection{Notation and Preliminaries}
\label{notation}
For a vertex $v$ in graph $G=(V,E)$ the \emph{neighborhood} of $v$ is defined as $N(v) := \{u\in V: (v,u)\in E\}$. Furthermore for $u,v \in V$, let $d(u,v)$ denote the length of the shortest path between $v,u$ in terms of the number of edges, which is called the \emph{distance} between $v$ and $u$ (for a vertex $v$, we define $d(v,v)=0$). For $v\in V$, $N_i(v):=\{u \in V:d(v,u)\leq i\}$ is the set of vertices in distance at most $i$ from $v$.

 
A \emph{generation} is a function $g:V\rightarrow\{b,r\}$ where $b$ and $r$ stand for blue and red, respectively. In addition to $g(v)=c$ for a vertex $v\in V$ and $c\in \{b,r\}$, we also write $g|_S=c$ for a set $S\subseteq V$ which means $\forall v\in S$, $g(v)=c$. For a graph $G=(V,E)$ and a random initial generation $g_0$, where $\forall v\in V$ $Pr[g_0(v)=b]=P_b$ and $Pr[g_0(v)=r]=P_r=1-P_b$ independently, assume $\forall i\geq 1$ and $v\in V$, $g_i(v)$ is equal to the color that occurs most frequently in $v$'s neighborhood in $g_{i-1}$, and in case of a tie $g_{i}(v)=g_{i-1}(v)$. This model is called the \emph{majority model}. Without loss of generality, we always assume that $P_b \leq P_r$. 

The \emph{random $d$-regular graph} $\mathbb{G}_{n,d}$ is the random graph with a uniform distribution over all $d$-regular graphs on $n$ vertices, say $[n]$ (in this paper, we assume whenever talking about $\mathbb{G}_{n,d}$, $dn$ is even). The definition of the random regular graph is conceptually simple, but it is not easy to use. However, there is an efficient way to generate $\mathbb{G}_{n,d}$ which is called the \emph{configuration model}~\cite{bender1978asymptotic}. 

In the configuration model for $V=[n]$, which is to be the vertex set of the graph, we associate the $d$-element set $W_i=\{i\}\times[d]=\{(i,i'): 1 \leq i' \leq d\}$ to vertex $1\leq i\leq n$. Let $W=[n]\times[d]$ be the union of $W_i$s; then a \emph{configuration} is a partition of $W$ into $dn/2$ pairs. These pairs are called the edges of the configuration. The natural projection of the set $W$ onto $V=[n]$ (ignoring the second coordinate) projects each configuration $F$ to a multigraph $\pi(F)$ on $V$. Note that $\pi(F)$ might contain loops and multiple edges. Thus, $\pi(F)$ is not necessarily a simple graph. We define the \emph{random $d$-regular multigraph} $\mathbb{G}^{\ast}_{n,d}$ to be the multigraph $\pi (F)$ obtained from a configuration $F$ chosen uniformly at random among all configurations on $W$. Bender and Canfield~\cite{bender1978asymptotic} proved that if we consider $\mathbb{G}_{n,d}^{*}$ and condition on it being a simple graph, we obtain a random $d$-regular graph on $V$ with uniform distribution over all such graphs. Furthermore, it is known~\cite{janson2011random} that if $Pr(\mathbb{G}^{*}_{n,d}\in A_n) \rightarrow 0$ as $n \rightarrow \infty$ then also $Pr(\mathbb{G}_{n,d} \in A_n) \rightarrow 0$, where $A_n$ is a subset of $d$-regular multigraphs on $V$. This allows us to work with $\mathbb{G}^{*}_{n,d}$ instead of $\mathbb{G}_{n,d}$ itself in our context. To generate a random configuration, it suffices to define an arbitrary ordering on the elements of $W$ and repeatedly match the first unmatched element in this order with another unmatched element uniformly at random. In Lemma \ref{lemma 1}, we utilize a slightly different construction from~\cite{bollobas1982diameter}.

\subsection{Prior Work}
\label{prior works}
Even though a substantial amount of effort has been put into the study of a wide spectrum of the majority-based dynamic processes, our attention here is mostly devoted to the prior work concerning the majority model. However, let us briefly point out a couple of remarkable accomplishments regarding the majority bootstrap percolation, which is arguably the closest model to ours. Aizenmann and Lebowitz~\cite{aizenman1988metastability} proved that in the $d$-dimensional lattice there is a threshold value $P_c$ so that $P_r\ll P_c$ and $P_c\ll P_r$ respectively result in the stable coexistence of both colors and fully red configuration w.h.p. Balogh and Bollobas~\cite{balogh2009majority} investigated the model on the $d$-dimensional hypercube and proved that the process has a phase transition with a sharp threshold. As discussed, the case of the random regular graph was also studied by Balogh and Pittel~\cite{balogh2007bootstrap}. 

The majority model was introduced by Spitzer~\cite{spitzer1970interaction} in 1970. Afterwards, the model's behavior was investigated mostly by computer simulations (i.e., Monte-Carlo methods). These computer simulations (see for instance~\cite{de1992isotropic}) suggested that the model shows a threshold behavior on the two-dimensional torus $T_{\sqrt{n},\sqrt{n}}$. To address this observation, it was proven~\cite{gartner2017color} that $P_b\ll n^{-1/4}$ and $P_b \gg n^{-1/4}$ respectively result in red monochromatic generation and the stable coexistence of both colors w.h.p. Furthermore Schonmann~\cite{schonmann1990finite} proved in the biased variant of the majority model, where in case of a tie always red is chosen, and torus $T_{\sqrt{n},\sqrt{n}}$, for $1/\log n\ll P_r$ w.h.p. the process reaches fully red generation. 

Since the updating rule is deterministic and there are $2^{|V|}$ possible colorings, the majority process must always reach a cycle of generations. Poljak and Turzik~\cite{poljak1986pre} showed that the number of rounds needed to reach the cycle (i.e., the consensus time) is $\mathcal{O}(|V|^2)$, and Goles and Olivos~\cite{goles1981comportement} proved the length of the cycle is always one or two. Frischknecht, Keller, and Wattenhofer~\cite{frischknecht2013convergence} showed there exists graph $G=(V,E)$ which needs $\Omega(|V|^2/\log^2 |V|)$ rounds to stabilize for some initial coloring in the majority model, which thus leaves only a poly-logarithmic gap. Kasser et al.~\cite{kaaser2015voting} studied a decision variant of the problem; they proved for a given graph $G=(V,E)$ and an integer $k$, it is NP-complete to decide whether there exists an initial coloring for which the consensus time is at least $k$.

Kempe, Kleinberg, and Tardos~\cite{kempe2003maximizing}, motivated from viral marketing, and independently Peleg~\cite{peleg1997local}, motivated from fault-local mending in distributed systems, introduced the concept of dynamic monopoly, a subset of vertices that can take over the whole graph. Afterwards, lots of studies regarding the size of dynamic monopolies and their behavior have been done. To name a few, even though it was conjectured~\cite{peleg1997local} that the size of the smallest dynamic monopoly in the majority model is $\Omega(\sqrt{|V|})$ for a graph $G=(V,E)$, Berger~\cite{berger2001dynamic}, surprisingly, proved there exist graphs with dynamic monopolies of constant size. Furthermore, Flocchini et al.~\cite{flocchini2004dynamic} studied the size of the smallest dynamic monopoly in the two dimensional torus. For more related results regarding dynamic monopolies, the interested reader is referred to a more recent work by Peleg~\cite{peleg2014immunity}. 

\section{Majority Model on Random Regular Graphs} 
\label{majority} 
The three special cases of $d=0,1,2$ are exceptions to many properties of the random $d$-regular graph $\mathbb{G}_{n,d}$. For instance, $\mathbb{G}_{n,d}$ is $d$-connected for $d \geq 3$, but disconnected for $d \leq 2$ w.h.p.~\cite{janson2011random}. In the majority model also these three special cases show a different sort of behavior, which intuitively comes from their disconnectivity. We shortly discuss these cases following two purposes. Firstly, their threshold behavior sounds interesting by its own sake. Secondly, as a warm-up it probably helps the reader to have a better understanding of the majority model before going through our main results and proof techniques concerning the density classification in Section \ref{density classifier} and dynamic monopolies in Section \ref{dynamo}. 

A $0$-regular graph is an empty graph with $n$ vertices, and a $1$-regular graph is the same as a perfect matching. We argue that in both cases $P_b \ll 1/n$ results in red monochromatic generation and $P_b \gg 1/n$ results in the coexistence of both colors w.h.p. (recall we assume $P_b\leq P_r$). Let random variable $X$ denote the number of blue vertices in the initial generation. $\mathbb{E}[X]=nP_b=o(1)$ for $P_b \ll 1/n$, and by Markov's inequality~\cite{feller1968introduction} w.h.p. $g_0|_V=r$. If $1/n \ll P_b$, then $\mathbb{E}[X]=\omega(1)$. Since X is the sum of $n$ independent Bernoulli random variables, Chernoff bound~\cite{feller1968introduction} implies that w.h.p. there exists a blue vertex in the initial generation, which guarantees the survival of blue color in both cases. 

We show the random $2$-regular graph $\mathbb{G}_{n,2}$ also has a phase transition, but at $1/\sqrt{n}$ instead of $1/n$. Actually more strongly, we prove that for any $n$-vertex $2$-regular graph, $P_b\ll 1/\sqrt{n}$ and $1/\sqrt{n}\ll P_b$ w.h.p. result in fully red generation and the stable coexistence of both colors, respectively. Notice a $2$-regular graph is the union of cycles of length at least $3$.

\begin{theorem}
In the majority model and an $n$-vertex $2$-regular graph $G=(V,E)$, $P_b\ll 1/\sqrt{n}$ results in red monochromatic generation and $1/\sqrt{n}\ll P_b$ outputs the stable coexistence of both colors w.h.p.
\end{theorem}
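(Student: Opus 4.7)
The plan leans on the fact that a 2-regular graph is a disjoint union of cycles, so every vertex has exactly two neighbors. Under the majority rule with this degree, a vertex changes color in a round if and only if both of its neighbors currently disagree with it. Two consequences drive the whole argument. First, if two adjacent vertices share a color, each one sees either both neighbors agreeing with it or a tie, and so both remain that color in the next round; by induction, any initially monochromatic edge stays monochromatic forever. Second, nothing new happens at a vertex unless its two neighbors are both of the opposite color.

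For the first claim, $P_b\ll 1/\sqrt n$, I would show by two first-moment calculations that w.h.p.\ the initial coloring satisfies both (a) no edge has both endpoints blue, and (b) no vertex has both neighbors blue. Since $|E|=n$, each event has expected count at most $nP_b^2=o(1)$, and Markov's inequality kills each one. When (a) and (b) both hold, one round of the majority rule turns every initial blue vertex red (both its neighbors are red by (a)) and leaves every red vertex red (no red has two blue neighbors by (b)), producing the fully red fixed point.

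For the second claim, $1/\sqrt n\ll P_b$, I would apply the second moment method to the number $Y$ of initially monochromatic blue edges. Writing $Y=\sum_{e}X_e$ with $X_e=\mathbf{1}[\text{both endpoints of $e$ are blue}]$, one has $\mathbb{E}[Y]=nP_b^2\to\infty$. The only nonvanishing covariances come from the at most $2n$ (ordered) pairs of distinct edges sharing a vertex, each contributing at most $P_b^3$, so $\mathrm{Var}[Y]\le nP_b^2+2nP_b^3=O(nP_b^2)$. Chebyshev then gives $Pr[Y=0]\le \mathrm{Var}[Y]/\mathbb{E}[Y]^2=O(1/(nP_b^2))=o(1)$, so w.h.p.\ a monochromatic blue edge exists, and by the freezing observation its two endpoints stay blue for all rounds. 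Since $P_b\le P_r$ forces $P_r\ge 1/2\gg 1/\sqrt n$, the symmetric second-moment argument trivially produces a monochromatic red edge as well, and both colors stably coexist.

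The only delicate point I expect is the freezing observation itself, which I want to hold uniformly across all cycle lengths — in particular for triangles, where the two endpoints of a monochromatic blue edge share the third vertex as a common neighbor, and one must check directly that this third vertex cannot eventually destabilize the pair (in fact the majority rule immediately absorbs it into blue). Everything else reduces to standard first- and second-moment calculations over the $n$ edges and $n$ vertices of the graph.
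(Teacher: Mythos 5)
Your proof is correct, and on the subcritical side it takes a genuinely different (and cleaner) route than the paper. The paper also rules out blue edges by a first moment bound, but then it must still argue that every cycle eventually turns red: it observes that odd cycles always contain a monochromatic (hence red) edge, separately bounds the expected number of even cycles with a perfectly alternating coloring, and then lets the red edges grow around each cycle, which takes up to $n/2$ rounds. Your extra first-moment condition --- no vertex has both neighbors blue, again with expected count $nP_b^2=o(1)$ --- subsumes the alternating-coloring case and collapses the whole dynamics: under your conditions (a) and (b) every blue vertex has two red neighbors and every red vertex has at least one red neighbor, so the graph is red monochromatic after a \emph{single} round, a quantitatively stronger conclusion with less case analysis. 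On the supercritical side the two arguments are essentially the same idea (exhibit a frozen monochromatic blue edge); the paper restricts to a linear-size matching of disjoint edges so that plain Chernoff applies to independent indicators, whereas you run the second moment method over all $n$ edges and control the $2n$ adjacent ordered pairs --- both are valid, the paper's being slightly more elementary and yours avoiding the need to pick a matching. Your explicit treatment of the red side (using $P_r\geq 1/2$) and of triangles is careful but not strictly necessary; the freezing lemma holds uniformly since a vertex on a monochromatic edge sees at worst a tie.
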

\begin{proof}
In a generation $g$, define a blue (red) edge to be an edge whose both endpoints are blue (red). Consider an arbitrary edge set $E'\subset E$ which contains linearly many disjoint edges (a maximum matching, say), and let random variable $X_1$ denote the number of blue edges of $E'$ in $g_0$. For $1/\sqrt{n}\ll P_b$, $\mathbb{E}[X_1]=\omega(1)$; thus, by Chernoff bound there is a blue edge in $g_0$ w.h.p. which guarantees the survival of blue color.

If $P_b \ll 1/\sqrt{n}$, then $\mathbb{E}[X_2]=o(1)$, where the random variable $X_2$ denotes the number of blue edges in $g_0$, which by Markov's inequality implies there is no blue edge in $g_0$ with high probability. If in a cycle there is no blue edge and there is at least a red edge, then the cycle gets red monochromatic after at most $n/2$ rounds because the red edge grows from both sides in each round until it covers the whole cycle. Thus, it only remains to show that each cycle contains a red edge with high probability. An odd cycle always contains a monochromatic edge (red in our case). Then, let $X_3$ denote the number of even cycles which contain no monochromatic edge; i.e., the vertices are red and blue one by one. Define $n_i$ to be the number of cycles of length $i$. We have

\[
\mathbb{E}[X_3]\leq \sum_{2 \leq i \leq \lfloor n/2\rfloor} n_{2i}\cdot 2P_b^{i}(1-P_b)^{i}\leq 2P_b^{2}\sum_{2 \leq i \leq \lfloor n/2 \rfloor} n_{2i} = o(1)
\]
where we used $P_b\ll 1/\sqrt{n}$ and the fact that there are at most linearly many cycles. Therefore, w.h.p. there is no cycle without a red edge. \qed 
\end{proof}
\subsection{Density Classification}
\label{density classifier}
In this section (Theorem \ref{theorem 5}), it is shown that in the $d$-regular random graph $\mathbb{G}_{n,d}$ and the majority model, $P_b\leq 1/2-\epsilon$, for an arbitrarily small constant $\epsilon>0$, results in fully red generation in $\mathcal{O}(\log_d\log n)$ rounds w.h.p. provided that $d\geq c/\epsilon^2$ for a suitable constant $c$. To prove that, first we need to provide Lemmas \ref{lemma 1} and \ref{lemma 4} as the ingredients, which are also interesting and important by their own sake. Specifically, the results in Section \ref{dynamo} concerning dynamic monopolies and immunity are built on Lemma \ref{lemma 4}.

Let say the $k$-neighborhood of a vertex $v$ in a graph $G$ is a tree if the induced subgraph by vertex set $N_k(v)$ is a tree. Roughly speaking, Lemma \ref{lemma 4} explains that for small $d$ and $k$ the expected number of vertices whose $k$-neighborhood is not a tree in $\mathbb{G}_{n,d}$ is small. This local tree-like structure turns out to be very useful in bounding the consensus time of the process.
\begin{lemma}
\label{lemma 1}
In $\mathbb{G}_{n,d}$, the expected number of vertices whose $k$-neighborhood is not a tree is at most $4d^{2k}$.
\end{lemma}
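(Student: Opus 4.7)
Plan: work in the configuration model $\mathbb{G}^{\ast}_{n,d}$ and reveal the edges around a fixed vertex $v$ one pairing at a time via a breadth-first exploration. Concretely, repeatedly pick an unmatched half-edge belonging to a vertex already in the BFS tree and match it with a uniformly random unmatched partner. Call a half-edge \emph{explored} once its owning vertex has been added to the tree. The point is that $N_k(v)$ is a tree unless, at some step of this exploration, the random partner turns out to be an already-explored half-edge; such a ``collision'' is exactly what manufactures a loop, a parallel edge, or a back-edge inside the explored ball. A union bound over pairings, followed by linearity of expectation over $v$, then yields the claim.

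Two inputs drive the estimate. First, in the tree case the BFS to depth $k$ performs exactly $S_k - 1$ pairings, where
\[
S_k \;:=\; 1 + \sum_{j=0}^{k-1} d(d-1)^j \;\leq\; 2 d^k
\]
is the size of a $d$-regular rooted tree of depth $k$. Second -- and this is the one non-routine observation -- at the $i$-th pairing the partial BFS tree has at most $i$ vertices (we start with the single vertex $v$, and each pairing introduces at most one new vertex), so at most $d i$ of the $dn - 2i + 1$ unmatched half-edges are explored. The conditional probability of a collision at step $i$ is therefore at most $di/(dn - 2i) \leq 2i/n$, valid whenever $d^k = o(n)$ (outside that regime the advertised $4 d^{2k}$ already exceeds $n$ and the lemma is trivial). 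Summing the arithmetic progression,
\[
\Pr[\,N_k(v) \text{ is not a tree}\,] \;\leq\; \sum_{i=1}^{2d^k} \frac{2i}{n} \;\leq\; \frac{4 d^{2k}}{n},
\]
and multiplying by $n$ gives the advertised bound on the expected count in $\mathbb{G}^{\ast}_{n,d}$. The transfer to $\mathbb{G}_{n,d}$ is the standard one, since the simplicity event has probability bounded below by a positive constant depending only on $d$.

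The main hurdle is extracting the \emph{triangular} growth of the pool of bad partners in step $i$ rather than crudely bounding it by the worst-case $d S_k = O(d^{k+1})$: the na\"ive estimate costs an extra factor of $d^k$ and yields only $O(d^{2k+1})$, whereas the observation $|V'| \leq i$ at step $i$ is what tightens the final constant to $4$. Everything else is bookkeeping on top of the Bollob\'as sequential pairing construction cited in the lemma's statement.
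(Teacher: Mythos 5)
Your proof is correct and follows essentially the same route as the paper's: both reveal the configuration around a fixed vertex by a sequential BFS-style pairing process, bound the probability that some pairing hits an already-explored class (your ``collision'', the paper's ``cycle-maker'') to get a per-vertex failure probability of about $4d^{2k}/n$ in the nontrivial regime $d^k \le n/2$, and conclude by linearity of expectation over the $n$ vertices. The only cosmetic difference is that you sum the collision probabilities pairing-by-pairing as an arithmetic progression, whereas the paper groups the pairings by BFS level and applies the worst-case per-edge bound $d^{k}/(n-d^{k})$ to each of the at most $2d^{k}$ edges; both yield the same $4d^{2k}$ (so the triangular summation is not actually needed to avoid an extra $d^{k}$ factor, only to shave a constant).
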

\begin{proof}
Firstly, we assume $k< \log_d(n/2)$ because otherwise the statement is clearly true. Furthermore as discussed in Section \ref{notation}, we work with the random $d$-regular multigraph $\mathbb{G}_{n,d}^{*}$ instead of the random $d$-regular graph $\mathbb{G}_{n,d}$, on vertex set $V=[n]$. We generate a uniformly at random configuration by partitioning $W=\bigcup_{1\leq i\leq n}W_i$ into $dn/2$ pairs as follows, where the $d$-element set $W_i=\{i\}\times [d]$ corresponds to the vertex $1\leq i\leq n$. In step 1, we start from an arbitrary class (we utilize the terms of $d$-element set and class interchangeably), say $W_1$, and match its elements one by one based on an arbitrary predefined order with an unmatched element (from $W_1$ or other classes) uniformly at random. We say a class has been \emph{reached} in step $j\geq 1$ if for the first time in step $j$ one of its elements has been matched. In step $j\geq 2$, we match the unmatched elements of the classes reached in step $j-1$ one by one based on a predefined ordering, say lexicographical order, with unmatched elements uniformly at random. It is possible in some step, no new class is reached. In this case, if all elements are matched, the process is over; otherwise we continue the process from one of the unreached classes, say the one with the smallest index.\footnote{For a more formal description of the construction, please see~\cite{bollobas1982diameter}, and notice since the second element always is chosen randomly, the generated configuration is random.}

Let say in step $j\geq 1$ we match element $x$ with an element $y$, chosen uniformly at random among all yet unmatched elements. One says $xy$ is a \emph{cycle-maker} if $y$ is not the first element matched in its class. The probability that an edge selected in the $j$-th step is a cycle-maker is at most $d^{j}/(n-d^j)$. Thus, the probability that there is a cycle-maker edge in the first $k$ steps is at most $2d^{k}\cdot\max_{1\leq j \leq k}\frac{d^j}{n-d^j}$ which is smaller than $\frac{2d^{2k}}{n-d^k}$. Let $X$ denote the number of vertices whose $k$-neighborhood is not a tree. Then, we have $\mathbb{E}[X]\leq (2nd^{2k})/(n-d^k)$ which is smaller than $ 4d^{2k}$ for $k< \log_d (n/2)$ because $n-d^{k}> n-d^{\log_d (n/2)}= n/2$. \qed  
\end{proof}
\begin{corollary}
\label{corollary 3}
In $\mathbb{G}_{n,d}$, the number of vertices whose $(c'\log_d\log_2n)$-neighborhood is not a tree is at most $\log_{2}^{2c'+1}n$ w.h.p., for constant $c'>0$.
\end{corollary}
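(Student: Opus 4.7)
The plan is to apply Lemma~\ref{lemma 1} together with Markov's inequality in a completely routine way. Let $X$ denote the number of vertices in $\mathbb{G}_{n,d}$ whose $k$-neighborhood is not a tree, where we set $k = c'\log_d \log_2 n$. First, I would substitute this choice of $k$ into the bound from Lemma~\ref{lemma 1} and simplify: since $d^{k} = d^{c'\log_d \log_2 n} = (\log_2 n)^{c'}$, we obtain
\[
\mathbb{E}[X] \;\leq\; 4 d^{2k} \;=\; 4 (\log_2 n)^{2c'}.
\]

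Next, I would apply Markov's inequality with threshold $\log_2^{2c'+1} n$, which yields
\[
\Pr\!\bigl[X \geq \log_2^{2c'+1} n\bigr] \;\leq\; \frac{\mathbb{E}[X]}{\log_2^{2c'+1} n} \;\leq\; \frac{4 (\log_2 n)^{2c'}}{(\log_2 n)^{2c'+1}} \;=\; \frac{4}{\log_2 n} \;=\; o(1),
\]
so the complementary event holds with high probability, which is precisely the statement of the corollary.

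There is no real obstacle here; the only minor bookkeeping point is verifying that the choice $k = c'\log_d \log_2 n$ still satisfies the condition $k < \log_d(n/2)$ needed in Lemma~\ref{lemma 1} (which holds for all sufficiently large $n$ since $c'\log_d \log_2 n \ll \log_d n$), and that the Markov step is valid because $X$ is a nonnegative integer-valued random variable. \qed
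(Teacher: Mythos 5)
Your proof is correct and follows exactly the same route as the paper's: apply Lemma~\ref{lemma 1} with $k = c'\log_d\log_2 n$ to get $\mathbb{E}[X]\leq 4\log_2^{2c'}n$, then conclude by Markov's inequality. The extra bookkeeping remark about $k<\log_d(n/2)$ is a harmless bonus the paper omits.
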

\begin{proof}
Let $X$ denote the number of vertices whose $(c'\log_d\log_2n)$-neighborhood is not a tree. By Lemma \ref{lemma 1}, $\mathbb{E}[X] \leq 4d^{2c'\log_d\log_2n}=4\log_{2}^{2c'}n$.
By Markov's inequality $Pr[X\geq \log_{2}^{2c'+1}n]\leq 4/\log_2n=o(1)$. \qed
\end{proof}
In a graph $G=(V,E)$ for two (not necessarily disjoint) vertex sets $S$ and $S'$, we say that $S$ \emph{controls} $S'$ if $S$ being monochromatic in some generation implies $S'$ being monochromatic (of the same color) in the next generation in the majority model, irrespective of the colors of other vertices. Clearly in $\mathbb{G}_{n,d}$, $S$ controls $S'$ implies that for every $v\in S'$ at least $\lceil d/2\rceil$ of its neighbors are in $S$. 
\begin{lemma}
\label{lemma 4}
In $\mathbb{G}_{n,d}$ on vertex set $V=[n]$ with $d\geq c_1$, w.h.p. there do not exist two vertex sets $S,S'$ such that $S$ controls $S'$, $|S|\leq \frac{n}{c^{\prime \prime}}$, $|S'|=\lceil \frac{10|S|}{d}\rceil$, where $c_1,c^{\prime \prime}$ are sufficiently large constants.
\end{lemma}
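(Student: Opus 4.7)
The natural approach is a first-moment argument in the configuration model $\mathbb{G}_{n,d}^{\ast}$, with the conclusion transferred to $\mathbb{G}_{n,d}$ by the contiguity fact recalled in Section \ref{notation}. Fix a size $s$ with $1 \le s \le n/c''$ and put $s' = \lceil 10s/d\rceil$. For any pair $(S,S')$ with $|S|=s$ and $|S'|=s'$, the observation preceding the lemma says that if $S$ controls $S'$, then every vertex of $S'$ has at least $\lceil d/2\rceil$ neighbors in $S$; in configuration-model terms, at least $e := s'\lceil d/2\rceil \ge 5s$ of the $ds'$ half-edges incident to $S'$ must be paired to half-edges inside $S$.

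Exchangeability of the uniform random pairing gives
\[
\Pr[S\text{ controls }S'] \;\le\; \binom{ds'}{e}\cdot\frac{(ds)(ds-2)\cdots(ds-2(e-1))}{(nd-1)(nd-3)\cdots(nd-2e+1)} \;\le\; \binom{ds'}{e}\Bigl(\frac{ds}{nd-2e}\Bigr)^{e}.
\]
Choosing $c''$ large enough that $2e \le nd/2$, this is at most $2^{10s+d}(2s/n)^{5s}$, using $\binom{ds'}{e} \le 2^{ds'}$ and $ds' \le 10s + d$. Combined with $\binom{n}{s}\binom{n}{s'} \le (en/s)^s (en/s')^{s'}$ and $s' \le 10s/d + 1$, the expected number of bad pairs of sizes $(s,s')$ simplifies, after routine algebra, to at most
\[
2^{d}\cdot B(s)^{s}, \qquad B(s) \;=\; C_{0}\cdot(s/n)^{4}\cdot\bigl(end/(10s)\bigr)^{10/d},
\]
for a universal constant $C_{0}$.

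To show $\sum_{s=1}^{\lfloor n/c''\rfloor} B(s)^{s} = o(1)$, I would split the range into two regimes. For $s \le \sqrt{n}$, the factor $(s/n)^4 \le n^{-2}$ dominates; if $d \ge c_{1}$ is taken large enough that $(end/(10s))^{10/d}$ is only mildly growing in $n$ (say at most $n^{1/2}$), then $B(s)$ is polynomially small in $n$ uniformly in this range, and the geometric sum is $o(1)$. For $\sqrt{n} < s \le n/c''$, I would only use the uniform bound $s \le n/c''$: a short calculation gives $B(s) \le C_{1}(c'')^{-(4-10/d)}$, and by choosing $c_{1}$ so that $4 - 10/d \ge 3$ and then $c''$ sufficiently large, $B(s) \le 1/2$. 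Hence each term in this regime is at most $2^{-s}$ and the tail sum is $O(2^{-\sqrt{n}}) = o(1)$. Markov's inequality then completes the argument.

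The main obstacle I anticipate is precisely this last balancing: the set-enumeration factor $(en/s')^{s'}$ must be dominated by the probabilistic saving $(s/n)^{s'd/2}$, which is possible only because $s'd/2 \ge 5s$ scales linearly in $s$ while $s'\log(n/s') \approx (10s/d)\log(n/s')$ is small when $d$ is large; this is the only place where the factor $10$ in $|S'|=\lceil 10|S|/d\rceil$ is used. Keeping the constants explicit enough---so that the two-regime split genuinely yields $o(1)$ rather than a harmless-looking $O(1)$ that would be insufficient for the w.h.p.\ conclusion---is where the routine calculation needs the most care.
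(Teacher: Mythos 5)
Your overall strategy---a first-moment bound in the configuration model, revealing the partners of the half-edges at $S'$ and taking a union bound over all pairs $(S,S')$ and over which half-edges are matched into $S\times[d]$---is exactly the paper's. The gap is in the central probability estimate. The lemma must allow $S\cap S'\neq\emptyset$ (the paper defines ``controls'' for not necessarily disjoint sets, and the application to a monochromatic set growing requires this). When one of your $e$ designated half-edges in $S'\times[d]$ is paired with another designated half-edge lying in $(S\cap S')\times[d]$, a \emph{single} random choice satisfies \emph{two} of the $e$ required matches. Hence the number of independent low-probability events you may charge for is only at least $e/2$, not $e$, and the claimed inequality $Pr[S\mbox{ controls }S']\le {ds' \choose e}\bigl(ds/(nd-2e)\bigr)^{e}$ is false in general. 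Concretely, for $d\ge 10$ and $s=1$ the lemma permits $S=S'=\{v\}$, so $s'=1$ and $e=\lceil d/2\rceil$; the event that the $e$ designated half-edges of $v$ are paired off among themselves has probability of order $(nd)^{-e/2}$ (up to a $d$-dependent factor), which is far larger than the order $n^{-e}$ your bound asserts. The paper handles precisely this point (``this may actually yield two of the required $\lceil d/2\rceil s'$ pairs'') and therefore counts only $\ell=\lceil d/2\rceil s'/2\ge 5s/2$ ``active'' iterations, obtaining $(2s/n)^{5s/2}$ where you claim $(2s/n)^{5s}$.

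The good news is that the corrected exponent still closes the argument: with $(2s/n)^{5s/2}$ your per-index rate becomes $B(s)=C_0\,(s/n)^{3/2}\bigl(end/(10s)\bigr)^{10/d}$ instead of $C_0\,(s/n)^{4}(\cdots)$, and both of your regimes survive for $d\ge c_1$ and $c''$ sufficiently large; this is essentially how the paper arrives at the summable bound $\sum_s (s/n)^{s/4}$. A secondary bookkeeping issue: after bounding ${ds' \choose e}\le 2^{ds'}\le 2^{10s+d}$ you drop the factor $2^{d}$ when summing $\sum_s B(s)^{s}$, but the lemma allows $d$ to grow with $n$, and for $d=\omega(\log n)$, $s=1$ the term $2^{d}B(1)$ need not be $o(1)$. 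The factor $2^{d}$ only dominates $2^{10s}$ when $d>10s$, i.e., $s'=1$ and $e=\lceil d/2\rceil$; in that regime you must retain the full decay $(2s/n)^{e/2}=(2s/n)^{\Omega(d)}$ rather than first weakening $e$ to $5s$, and then $2^{d}$ is harmless.
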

This immediately implies that in $\mathbb{G}_{n,d}$ and the majority model, less than $\frac{n}{c^{\prime\prime}}$ blue (red) vertices will die out in $\mathcal{O}(\log_dn)$ rounds, with high probability.
\begin{proof} 
We fix two sets $S,S'$ of the given sizes $s$ and $s'$. We show that the probability for $S$ controlling $S'$ is so small that a union bound over all pairs $(S,S')$ yields the desired high probability result. We equivalently work in $\mathbb{G}^{\ast}_{n,d}$ the relevant ``initial'' part of which we generate as follows: we iterate through the pairs in $S'\times[d]$ in some fixed order and match each yet unmatched pair with a random unmatched pair in $V\times[d]$. In order for $S$ to control $S'$, at least $\lceil d/2\rceil$ of the $d$ pairs $(v,i)$ must get matched with pairs in $S\times[d]$, for every $v\in S'$. Overall, at least $\lceil d/2\rceil s'$ of the $ds'$ pairs $S'\times[d]$ get matched with pairs in $S\times[d]$. Such a match is established only when the randomly chosen partner happens to be in $(S\cup S')\times[d]$, and this may actually yield two of the required $\lceil d/2\rceil s'$ pairs. Hence, for $S$ to control $S'$, at least $\ell:=\lceil d/2\rceil s'/2$ of the $L:=ds'$ iterations must be \emph{active}, meaning that they match a yet unmatched pair with a pair in $(S\cup S')\times[d]$. \newcommand{\bb}{\mathbf{b}} For a bit vector $\bb$ of length at most $L$, let $A(\bb)$ denote the event that iteration $i$ is active for exactly the indices where $b_i=1$. Then $Pr[A(b_1,b_2\ldots,b_L)] = \prod_{i=1}^LPr[\mbox{iteration $i$ is active if $b_i=1$}|A(b_1,\ldots,b_{i-1})]$ (the right-hand side is a telescoping product). Now, irrespective of $b_1,\ldots,b_{i-1}$, an iteration is active with probability at most $d(s+s')/(nd-2ds') = (s+s')/(n-2s')\leq 2s/n$. Hence, for a vector $\bb$ with at least $\ell$ ones, $Pr[A(b_1,b_2\ldots,b_L)]\leq (2s/n)^{\ell}$. As there are at most $2^L$ such vectors, the probability that at least $\ell$ iterations are active is at most $2^L(2s/n)^{\ell}\leq 2^{10s}(2s/n)^{\frac{5}{2}s}$. Hence by a union bound, the probability $P$ that there exist such sets $S$ and $S'$ in a random configuration is at most $\sum_{s=1}^{\frac{n}{c^{\prime\prime}}} {n \choose s} {n \choose \lceil \frac{10s}{d}\rceil} 2^{10s}(\frac{2s}{n})^{\frac{5}{2}s}$. Since $d\geq c_1$ for a large constant $c_1$, ${n \choose \lceil \frac{10s}{d}\rceil}\leq{n \choose s}$; thus, applying Stirling's approximation~\cite{feller1968introduction} (i.e., ${n \choose k}\leq (ne/k)^k$) yields $P\leq \sum_{s=1}^{\frac{n}{c^{\prime\prime}}} (\frac{ne}{s})^{2s} 2^{10s}(\frac{2s}{n})^{\frac{5}{2}s}$. Furthermore, since $e^{2s}\cdot2^{10s}\cdot2^{\frac{5}{2}s}\leq (c^{\prime\prime})^{s/4}\leq (n/s)^{s/4}$ for sufficiently large $c^{\prime\prime}$, we have $P \leq \sum_{s=1}^{\frac{n}{c^{\prime\prime}}} (\frac{n}{s})^{2s} (\frac{s}{n})^{\frac{9}{4}s}=\sum_{s=1}^{\frac{n}{c^{\prime\prime}}}(\frac{s}{n})^{\frac{s}{4}}=o(1)$. \qed
\end{proof}
As will be discussed in the proof of Theorem \ref{theorem 5}, for the majority model on $\mathbb{G}_{n,d}$ with $P_b= 1/2-\epsilon$ and $d\geq c/\epsilon^2$ there is a simple argument which shows that the expected density of the blue vertices drops from $1/2-\epsilon$ in $g_0$ to an arbitrarily small constant in $g_1$ if we select the constant $c$ sufficiently large. Therefore, one might want to apply Lemma \ref{lemma 4} to show the process w.h.p. gets red monochromatic in $\mathcal{O}(\log_d n)$ rounds. However, in Theorem \ref{theorem 5} we show actually $\mathcal{O}(\log_d\log n)$ rounds suffice to get red monochromatic with high probability. To prove that, we need the tree structure argued in Lemma \ref{lemma 1}.
 
\begin{theorem}
\label{theorem 5}
In the majority model and $\mathbb{G}_{n,d}$, by starting from $P_b\leq 1/2-\epsilon$, for an arbitrarily small constant $\epsilon>0$, the process gets red monochromatic in $\mathcal{O}(\log_d\log n)$ rounds w.h.p. provided that $d\geq c/\epsilon^2$ for suitable constant $c$. 
\end{theorem}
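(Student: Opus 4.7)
The strategy is to split the evolution into three phases whose total length is $O(\log_d\log n)$, combining the local tree structure from Corollary \ref{corollary 3} with the shrinking property of Lemma \ref{lemma 4}.

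\emph{Phase 1 (one-round Chernoff).} Since the initial coloring is independent of the graph, each of the $d$ neighbors of a fixed vertex $v$ is blue independently with probability at most $P_b\le 1/2-\epsilon$. A Chernoff bound gives $\Pr[v\text{ is blue at round }1]\le e^{-c_2\epsilon^2 d}$ for an absolute constant $c_2>0$, and by the hypothesis $d\ge c/\epsilon^2$ the constant $c$ can be chosen large enough that this probability is at most any prescribed constant $\alpha<1/16$ and in addition $\log(1/\alpha)\ge c_2\epsilon^2 d\ge 4\log 2$, which is the starting value the iteration in Phase 2 needs.

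\emph{Phase 2 (doubly exponential decay via the tree structure).} Set $k:=c'\log_d\log n$ for a suitable constant $c'$. By Corollary \ref{corollary 3}, w.h.p.\ all but $O(\log^{2c'+1}n)$ vertices have a tree $k$-neighborhood; call these \emph{good}. For a good $v$, the color of $v$ at any round $t\le k$ is a deterministic function of the initial i.i.d.\ Bernoulli$(P_b)$ colors on the finite $d$-regular tree $N_k(v)$. I would set up a recursion on this tree---treating the $d$ subtrees rooted at $v$'s neighbors as (conditionally) independent sources---to obtain an estimate of the form $p_{t+1}\le(2\sqrt{p_t})^d$, where $p_t$ denotes the probability that a good vertex is blue at round $t$. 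Writing $a_t:=\log(1/p_t)$, this reads $a_{t+1}\ge (d/2)a_t-d\log 2$, so once $a_1\ge 4\log 2$ (which Phase 1 arranges) we have $a_t\ge(3d/8)^{t-1}a_1$; hence $a_k\ge 2\log n$, i.e.\ $p_k\le n^{-2}$, after $k=O(\log_d\log n)$ rounds. A union bound over the at most $n$ good vertices then shows that w.h.p.\ every blue vertex at round $k$ is bad, so $|B_k|=O(\log^{2c'+1}n)$.

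\emph{Phase 3 and the main obstacle.} For any $t$, the blue set $B_t$ controls $B_{t+1}$, since every $v\in B_{t+1}$ has at least $\lceil d/2\rceil$ neighbors in $B_t$. Lemma \ref{lemma 4} with $S=B_t$ therefore forbids $|B_{t+1}|\ge\lceil 10|B_t|/d\rceil$ as long as $|B_t|\le n/c''$; starting from $|B_k|=O(\log^{2c'+1}n)$, the blue set shrinks by a factor $\Theta(d)$ per round and vanishes after an additional $O(\log_d\log n)$ rounds. The hard part of the whole argument is the Phase 2 recursion: in the tree, the colors of $v$'s neighbors at round $t-1$ are \emph{not} mutually independent, because they share dependence on $v$'s own initial color and, for $t\ge 2$, on increasingly large sets of common ancestors. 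Resolving this requires a one-sided / cavity-style recursion on the $(d-1)$-ary subtrees rooted at each neighbor of $v$, combined with conditioning on $v$'s past colors to restore exchangeability. Any multiplicative slack introduced at each step gets raised to the power $(d/2)^k$ at the end, so the recursion must be set up tightly enough to support $p_k\le n^{-2}$; once it is, Phases 1 and 3 follow almost immediately from the already-proved tools, and stitching the three phases together with the relevant high-probability events yields the theorem.
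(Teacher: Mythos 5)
Your three-phase architecture coincides with the paper's: one round of a Chernoff-type estimate to push the blueness probability below a small constant, then $O(\log_d\log n)$ rounds of doubly exponential decay using the tree-like neighborhoods from Corollary~\ref{corollary 3}, then Lemma~\ref{lemma 4} to eliminate the remaining $O(\log^{2c'+1}n)$ blue vertices. Phases 1 and 3 are essentially the paper's argument, and even the shape of your Phase 2 recursion, $p_{t+1}\le(2\sqrt{p_t})^{d}$, matches the paper's $P_i\le 2^{d-1}P_{i-1}^{(d-1)/2}$.

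However, the one step you explicitly flag as ``the main obstacle'' --- justifying that recursion despite the dependence among the colors of $v$'s neighbors at round $t-1$ --- is left unresolved, and it is precisely the step that carries the proof. Your proposed fix (``a cavity-style recursion \dots combined with conditioning on $v$'s past colors to restore exchangeability'') is not carried out, and as stated it is not clear it can be, since any per-round slack is amplified through $k$ iterations, as you yourself note. The paper's resolution is different and cleaner: it does not analyze the actual majority dynamics on the tree at all. Instead it introduces an auxiliary \emph{propagation process} on the depth-$k$ tree $N_k(v)$ in which only the leaves receive i.i.d.\ initial colors, internal vertices start inactive, and each internal vertex is colored bottom-up, becoming blue iff at least $\lfloor(d-1)/2\rfloor$ of its \emph{children} are blue. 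Because this process is a deterministic function of the leaf colors only, and the subtrees hanging off distinct children of any vertex have disjoint leaf sets, the children's colors are genuinely independent and the recursion $P_i=\sum_{j\ge\lfloor(d-1)/2\rfloor}\binom{d-1}{j}P_{i-1}^{j}(1-P_{i-1})^{d-1-j}$ is exact. A short induction on $k$ shows this process dominates the majority model one-sidedly: if the root is red in the propagation process, it is red in $g_k$, because a vertex that is blue in $g_i$ must have at least $\lceil d/2\rceil$ blue neighbors in $g_{i-1}$, of which at most one is its parent, hence at least $\lfloor(d-1)/2\rfloor$ blue children. Supplying this dominating process (or an equivalent device) is the missing idea; without it your Phase 2 is a plan rather than a proof.
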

\begin{proof}
Let say a vertex is in the $j$-th level of a rooted tree if its distance to the root is $j$. Now, consider a tree $T$ rooted at vertex $v$ and of height $k$ so that except the vertices in the $k$-th level (i.e., leaves), all vertices are of degree $d$. We consider the following process on $T$, which we call the \textit{propagation process}, where in the initial configuration all the internal vertices are \emph{inactive} and each leaf is blue with probability $P_b$ and red with probability $1-P_b$ independently. Assume in each round an inactive vertex whose all children are colored adopts color blue if at least $\lfloor (d-1)/2 \rfloor$ of its children are blue and red otherwise. Clearly after $k$ rounds, the root (vertex $v$) is colored with blue or red. Let $P_i$ for $0 \leq i \leq k$ denote the probability that a vertex in the $(k-i)$-th level is blue after round $i$; specifically, $P_k$ is the probability that vertex $v$ is blue at the end of the process. More accurately, $P_0=P_b$ and for $1 \leq i \leq k$ $P_i=\sum_{j=\lfloor (d-1)/2\rfloor}^{d-1}{d-1 \choose j} P_{i-1}^{j}(1-P_{i-1})^{d-1-j}$.

Now, let us get back to the majority model and the random $d$-regular graph $\mathbb{G}_{n,d}$. Consider a vertex $v$ so that the induced subgraph by $N_k(v)$ is a tree $T$. Clearly in $T$, except the vertices in the $k$-th level, all vertices are of degree $d$. Now, we claim the probability that vertex $v$ is blue in generation $g_k$ in the majority model is at most $P_k$, which is equivalent to the probability that the root of $T$ is blue in the propagation process after $k$ rounds, with the same $P_b$. This is true because by starting with the same coloring for the leaves of $T$ (the vertices in distance $k$ from root $v$) if in the $k$-th round in the propagation process the root is red, it is also red in the majority model and generation $g_k$, irrespective of the colors of other vertices. For $k=1$ this is trivially true. Now, do the induction on $k$; if the root is red in the propagation process after $k$-th round, it means that there exist less than $\lfloor (d-1)/2 \rfloor$ blue vertices among root's children in round $k-1$ which implies by the induction hypothesis there are less than $\lfloor (d-1)/2 \rfloor$ blue vertices in $v$'s neighborhood in $g_{k-1}$ in the majority model; then $g_k(v)=r$.

So far, we showed the probability of being blue in $g_k$ for a vertex, whose $k$-neighborhood is a tree, is at most $P_k$ with $P_0=P_b$. Now, we upper-bound the probability $P_k$. Without loss of generality, assume $d$ is odd, and suppose $d'=d-1$. First, let us bound $P_1$; clearly, $P_1\leq \sum_{j=d'/2}^{d'}{d' \choose j} (1/2-\epsilon)^{j}(1/2+\epsilon)^{d'-j}$ which is smaller than
\[ (1/2-\epsilon)^{d'/2}(1/2+\epsilon)^{d'/2}\sum_{j=d'/2}^{d'}{d' \choose j}\leq (1/4-\epsilon^2)^{d'/2}2^{d'}= (1-4\epsilon^2)^{d'/2}.
\]
By applying the estimate $1-x\leq e^{-x}$, we have $P_1\leq e^{-2d'\epsilon^2}$. For $d\geq c_1'\log n$, where $c_1'$ is a large constant, clearly $P_1\leq 1/n^2$ which implies the expected number of blue vertices in $g_1$ is at most $1/n$; i.e., the process gets red monochromatic in one round with high probability. Thus, it only remains to discuss the case of $d\leq c_1'\log n$ for an arbitrarily large constant $c_1'$; in this case, since $P_1\leq e^{-2d'\epsilon^2}$, selecting suitble constant $c$, for $d\geq c/\epsilon^2$, results in $P_1\leq 1/16$. Now, we show $P_i\leq P_{i-1}^{d'/4}$ for $P_{i-1}\leq 1/16$, which yields $P_k\leq 1/n^2$ for $k=c'\log_d\log_2 n$ by selecting constant $c'$ large enough. We know $P_i\leq P_{i-1}^{d'/2}\sum_{j=d'/2}^{d'}{d' \choose j}\leq P_{i-1}^{d'/2}2^{d'}$. Thus, by utilizing $P_{i-1}\leq 1/16$, one has $P_i\leq P_{i-1}^{d'/4}$. Now, let random variable $X_1$ ($X_2$) denote the number of vertices whose $k$-neighborhood for $k=c'\log_d\log_2 n$, is (not) a tree and are blue in $g_k$. We know $\mathbb{E}[X_1]\leq n P_k\leq 1/n$, which implies $X_1=0$ w.h.p. by Markov's inequality. Furthermore, by using Corollary \ref{corollary 3} with high probability $X_2 \leq \log_{2}^{2c'+1}n$. Hence, with high probability the number of blue vertices in $g_k$ is upper bounded by $\log_{2}^{2c'+1}n$. However based on Lemma \ref{lemma 4}, poly-logarithmically many blue vertices die out in $\mathcal{O}(\log_d\log n)$ rounds w.h.p. which finishes the proof. \qed 
\end{proof}
Now, we argue that the bound of $\mathcal{O}(\log_d\log n)$ is tight. We prove in $\mathbb{G}_{n,d}$ and for a constant small initial density $P_b$, say $P_b=1/4$, after $k'=\frac{\log_d\log_2n}{2}$ rounds w.h.p. there exist some blue vertices. We claim in $\mathbb{G}_{n,d}$ there are $\sqrt{n}$ vertices, say $u_1,\cdots,u_{\sqrt{n}}$, whose $k'$-neighborhood is pairwise disjoint. Define indicator random variable $x_i$ to be 1 if $g_0|_{N_{k'}(u_i)}=b$. Clearly, $Pr[x_i=1]\geq(1/4)^{2d^{k'}}=1/2^{4\sqrt{\log_2n}}$. Let $X=\sum_{i=1}^{\sqrt{n}}x_i$; then $\mathbb{E}[X]\geq \sqrt{n}/2^{4\sqrt{\log_2n}}=\omega(1)$. By using Chernoff bound, there exists a vertex $v$ so that $g_0|_{N_{k'}(v)}=b$, which implies $g_{k'}(v)=b$. Now, we prove that there exist $\sqrt{n}$ vertices whose $k'$-neighborhood is pairwise disjoint in every $d$-regular graph. For a $d$-regular graph $G$ by starting from the state that all vertices are \emph{unmarked}, recursively we choose an arbitrary unmarked vertex $u$ and add $u$ to set $U$, which is initially empty, and mark all vertices in $N_{2k'}(u)$. Clearly, the vertices in set $U$ have our required disjointness property, and set $U$ will be of size larger than $\sqrt{n}$ at the end because in each step we mark at most poly-logarithmically many vertices while we start with linearly many unmarked vertices. 
\subsection{Dynamic Monopoly and Immunity}
\label{dynamo}
In distributed systems, the resolution of inconsistencies by the majority rule is a common tool; the idea is to keep redundant copies of data and perform the majority rule to overcome the damage caused by failures. Motivated from this application, one might be interested in the networks in which no small subset of malicious/failed processors can take over a large part of the network. To address this issue, Peleg~\cite{peleg2014immunity} suggested the concept of immunity. An $n$-vertex graph $G$ is $(\alpha,\beta)$-\emph{immune} if a set of $m\leq \beta n$ vertices with a common color can take over at most $\alpha m$ vertices in the next round. One is interested in graphs which are $(\alpha,\beta)$-immune for constant $\beta$ and small $\alpha$ because roughly speaking these graphs are acceptably tolerant of malicious/failed vertices (processors). Peleg~\cite{peleg2014immunity} proved the following theorem regarding the existence of such graphs.
\begin{theorem}\cite{peleg2014immunity}
\label{theorem 6}
There exist constants $c_1,c_2,\beta>0$ such that for every $d\geq c_1$, there exists a $d$-regular graph $G$ which is $(\frac{c_2\log n}{d},\beta)$-immune.
\end{theorem}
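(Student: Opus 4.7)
My plan is to prove Theorem \ref{theorem 6} by the probabilistic method, showing that the random $d$-regular graph $\mathbb{G}_{n,d}$ is $(c_2\log n/d,\beta)$-immune with high probability for appropriately chosen universal constants. Since a property that holds w.h.p.\ must be satisfied by at least one graph in the ensemble, this yields the desired existence. The immunity property is equivalent to the non-existence of a pair $(S,S')$ with $|S| = s \leq \beta n$ and $|S'| = s' := \lceil (c_2\log n/d)s\rceil$ such that every vertex in $S'$ has at least $\lceil d/2\rceil$ neighbors in $S$ --- that is, such that $S$ \emph{controls} $S'$ in the terminology of Section \ref{majority}.

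For the per-pair probability estimate, I would mimic the proof of Lemma \ref{lemma 4}. Working in the configuration model for $\mathbb{G}^{*}_{n,d}$, I iterate through the $L := ds'$ pair-slots in $S'\times[d]$ in a prescribed order and match each with a uniformly random unmatched partner. For $S$ to control $S'$, at least $\ell := \lceil d/2\rceil s'/2 \geq ds'/4$ of these iterations must be \emph{active}, i.e.\ land in $(S\cup S')\times[d]$. A telescoping product bounds the probability of any prescribed active pattern with at least $\ell$ ones by $(2(s+s')/n)^{\ell}$, yielding $\Pr[S\text{ controls }S'] \leq 2^L(2(s+s')/n)^{\ell}$. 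A union bound over the $\binom{n}{s}\binom{n}{s'}$ choices of $(S,S')$ and a sum over $1 \leq s \leq \beta n$, combined with Stirling's estimate, then bound the total failure probability. The main arithmetic challenge is to verify that this sum is $o(1)$: the entropy $\log\binom{n}{s} \sim s\log n$ in the small-$s$ regime must be overcome by the exponent $ds'/4 = (c_2/4) s\log n$ in the per-pair bound, which forces $c_2$ to be a sufficiently large constant and $\beta$ a sufficiently small one. This is precisely where the $\log n$ factor becomes unavoidable in this elementary strategy --- each vertex of $S$ contributes roughly $\log n$ bits of entropy and must be paid for by a matching $\log n$ factor in the probability exponent.

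Alternatively, one may observe that Lemma \ref{lemma 4} already establishes a strictly stronger version of the same combinatorial property: w.h.p.\ no such $(S,S')$ exists even for the much smaller target $|S'| = \lceil 10s/d\rceil$. Since $10/d \leq c_2\log n/d$ as soon as $c_2\log n \geq 10$, which holds for $n$ larger than some absolute constant after taking $c_2$ large, Theorem \ref{theorem 6} follows as an immediate corollary of Lemma \ref{lemma 4} upon setting $\beta := 1/c^{\prime\prime}$ and fixing $c_2$ accordingly. This is indeed the route by which the present paper closes Peleg's logarithmic gap, and it is the proof I would actually recommend now that Lemma \ref{lemma 4} is available.
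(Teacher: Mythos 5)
Note first that the paper does not prove Theorem~\ref{theorem 6} at all: it is quoted from Peleg~\cite{peleg2014immunity} as prior work, and the paper's contribution is the strictly stronger Corollary~\ref{corollary 1}, obtained as an immediate consequence of Lemma~\ref{lemma 4}. Your second route is therefore correct and is essentially the paper's own logic: Lemma~\ref{lemma 4} gives $(10/d,\,1/c'')$-immunity of $\mathbb{G}_{n,d}$ w.h.p.\ (a set of size $s\leq n/c''$ controls at most $\lceil 10s/d\rceil-1<10s/d$ vertices), and since $(\alpha,\beta)$-immunity is monotone in $\alpha$, this implies $(\frac{c_2\log n}{d},\beta)$-immunity once $c_2\log n\geq 10$, so existence follows by the probabilistic method. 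I would, however, flag a genuine gap in your first, ``direct'' route: with $\alpha=c_2\log n/d$ and $d$ a constant, the target set has size $s'=\lceil\alpha s\rceil\gg s$, and for $s$ near the top of the nontrivial range (i.e., $\alpha s$ comparable to $n$) the per-iteration activity bound $2(s+s')/n$ is no longer bounded below $1$, so the estimate $2^L(2(s+s')/n)^{\ell}$ becomes vacuous there; closing that regime essentially forces you back to the stronger ``small $S'$'' statement of Lemma~\ref{lemma 4}. Since you explicitly recommend the second route, the proposal as a whole stands, and it agrees with the paper's treatment.
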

Peleg also argued that this result is tight up to a logarithmic factor, meaning there is a constant $c_2>0$ such that for any constant $\beta>0$, there exist no $(\frac{c_2}{d},\beta)$-immune $d$-regular graph. Now as an immediate implication of Lemma \ref{lemma 4}, we present Corollary \ref{corollary 1} which improves upon Peleg's results by removing the extra logarithmic term. Hence, this result is tight up to a constant.
\begin{corollary}
\label{corollary 1}
There exist constants $c_1,c_2,\beta>0$ such that for every $d\geq c_1$, there exists a $d$-regular graph $G$ which is $(\frac{c_2}{d},\beta)$-immune.
\end{corollary}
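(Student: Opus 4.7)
The plan is to deduce the corollary almost immediately from Lemma~\ref{lemma 4}. Let $c_1, c''$ be the constants guaranteed by Lemma~\ref{lemma 4}, and set $c_2 := 10$ and $\beta := 1/c''$. For any $d \geq c_1$, Lemma~\ref{lemma 4} ensures that with high probability $\mathbb{G}_{n,d}$ contains no pair of vertex sets $S, S'$ with $|S| \leq n/c''$, $|S'| = \lceil 10|S|/d \rceil$, and $S$ controlling $S'$. In particular, a $d$-regular graph $G$ with this property exists for every sufficiently large $n$, and this is the graph we take to witness immunity.

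It remains to verify that any such $G$ is $(c_2/d, \beta)$-immune, which I plan to do by a short contradiction. Suppose some monochromatic set $S$ with $m := |S| \leq \beta n = n/c''$ takes over a set $T$ of vertices with $|T| > c_2 m / d = 10m/d$. Since $|T|$ is a positive integer strictly exceeding $10m/d$, one has $|T| \geq \lceil 10m/d \rceil$. Pick an arbitrary subset $S' \subseteq T$ with $|S'| = \lceil 10m/d \rceil$. The point is that ``$S$ takes over $v$'' is by definition the same per-vertex condition as ``$S$ controls $\{v\}$'', so it passes to arbitrary subsets of $T$; thus $S$ controls $S'$ in the sense of the paper. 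The triple $(S, S', m)$ then matches precisely the forbidden configuration of Lemma~\ref{lemma 4}, contradicting our choice of $G$.

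No substantial obstacle is anticipated; the argument is really just bookkeeping around the ceiling in Lemma~\ref{lemma 4}. The only minor subtlety is the small-$m$ regime where $\lceil 10m/d \rceil = 1$, in which one must argue that $S$ cannot take over even a single vertex. This is automatic: taking over any vertex $v$ requires at least $\lceil d/2 \rceil$ of $v$'s neighbors to lie in $S$, which already forces $m \geq \lceil d/2 \rceil \gg d/10$, so in the complementary range $m < \lceil d/2 \rceil$ the immunity bound holds vacuously. Combining the two ranges gives the claimed $(c_2/d, \beta)$-immunity for all $m \leq \beta n$.
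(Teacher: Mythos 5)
Your proposal is correct and matches the paper's intended route: the paper presents Corollary~\ref{corollary 1} as an immediate consequence of Lemma~\ref{lemma 4} without spelling out the details, and your bookkeeping (passing control to a subset $S'\subseteq T$ of size exactly $\lceil 10m/d\rceil$, plus the observation that an integer exceeding $10m/d$ is at least $\lceil 10m/d\rceil$) is exactly the missing glue. The separate treatment of the small-$m$ regime is fine, though strictly redundant, since Lemma~\ref{lemma 4} already forbids $S$ from controlling a single vertex when $\lceil 10m/d\rceil=1$.
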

Furthermore, we say a graph is \emph{immune} if the smallest dynamic monopoly is of linear size, in terms of the number of vertices. We recall that for a graph $G=(V,E)$, a set $D \subseteq V$ is called a dynamic monopoly whenever the following holds: if in the initial generation all vertices of $D$ are blue (red) then the process reaches the blue (red) monochromatic generation, irrespective of the colors of other vertices. As an open problem, Peleg~\cite{peleg2014immunity} asked that whether there exist regular immune graphs. The existence of immune $d$-regular graphs for $d\gg \log n$ is straightforward from Theorem \ref{theorem 6}, but, the question is unanswered for small $d$, while one is more interested in sparse immune regular graphs from a practical, or even a theoretical, perspective. Again, as an immediate result of Lemma \ref{lemma 4}, we have Corollary \ref{corollary 2} which actually represents a stronger statement. 
\begin{corollary}
\label{corollary 2}
$\mathbb{G}_{n,d}$ with $d\geq c_1$ is immune w.h.p. for large constant $c_1$.
\end{corollary}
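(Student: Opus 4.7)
I would derive Corollary \ref{corollary 2} directly from the sentence printed immediately after the proof of Lemma \ref{lemma 4}, namely that in $\mathbb{G}_{n,d}$ with $d \geq c_1$, w.h.p.\ any color held by fewer than $n/c''$ vertices must die out within $\mathcal{O}(\log_d n)$ rounds of the majority process. Setting $\beta := 1/c''$, the plan is to argue by contradiction that no $D \subseteq V$ with $|D| \leq \beta n$ can be a dynamic monopoly; since a dynamic monopoly always exists (e.g.\ $V$ itself), this forces the smallest one to have at least $\beta n$ vertices, which is exactly the definition of immunity.

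The contradiction proceeds as follows. Suppose such a $D$ exists and, without loss of generality, its common color is blue. Pick the worst-case initial generation $g_0$ with $g_0|_D = b$ and $g_0|_{V \setminus D} = r$. By the definition of a dynamic monopoly, the process must reach the all-blue generation from this $g_0$, so in particular the blue set is nonempty in every round. But the blue set in $g_0$ is $D$, of size at most $n/c''$, so the cited consequence of Lemma \ref{lemma 4} forces it to empty out within $\mathcal{O}(\log_d n)$ rounds w.h.p., contradicting the dynamic-monopoly property.

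The only piece not already written out in the paper is the short induction behind that consequence of Lemma \ref{lemma 4}, which I would spell out for clarity. Concretely: if at some round the blue set $B$ has $|B| \leq n/c''$, then every vertex of the next blue set $B'$ has at least $\lceil d/2 \rceil$ of its neighbors in $B$ (this is the necessary structural condition produced by the majority rule, whether the vertex was already blue and stayed blue or flipped from red to blue). Lemma \ref{lemma 4} then rules out $|B'| \geq \lceil 10|B|/d \rceil$, so for $d \geq c_1$ sufficiently large we obtain $|B'| \leq |B|/2$ (and $|B'|=0$ once $|B|$ drops below roughly $d/10$), giving geometric decay of $|B|$ to zero in $\mathcal{O}(\log_d n)$ rounds. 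Since this is the only nontrivial step and it is essentially bookkeeping, there is no genuine obstacle in the corollary itself; all the real content is already carried by Lemma \ref{lemma 4}.
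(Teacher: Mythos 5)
Your proposal is correct and is precisely the derivation the paper intends: Corollary \ref{corollary 2} is stated there as an ``immediate result of Lemma \ref{lemma 4}'' with no written proof, and your contradiction argument (a sub-$n/c''$ monopoly colored blue against an all-red complement, combined with the geometric shrinkage $|B'|<\lceil 10|B|/d\rceil$ forced by Lemma \ref{lemma 4} and the $\lceil d/2\rceil$-neighbor necessary condition) is exactly the bookkeeping the authors leave implicit. The details you fill in, including the observation that the high-probability event of Lemma \ref{lemma 4} holds simultaneously for all rounds so no union bound over time is needed, are sound.
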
 
\section*{Conclusion} We claim our techniques can be applied to analyze the behavior of the majority model on the binomial random graph $\mathbb{G}_{n,p}$. For $p\gg \log n/n$, one can show $P_b\leq 1/2-\epsilon$ results in fully blue generation in one round w.h.p. by using the argument regarding the case of $d\geq c_1'\log n$ in the proof of Theorem \ref{theorem 5}. For $p\ll \log n/n$ the graph contains a red and a blue isolated vertex in $g_0$ w.h.p. for a fixed $P_b>0$ which result in the coexistence of both colors. 

Exploring the relation between the behavior of the majority model and the expansion level of the underlying graph can be a prospective research direction. Specifically, it would be interesting to prove that graphs with some certain level of expansion have a density classification behavior similar to $\mathbb{G}_{n,d}$.
\subsubsection*{Acknowledgments.} The authors would like to thank Mohsen Ghaffari for several stimulating conversations and Jozsef Balogh and Nick Wormald for referring to some relevant prior results.
\bibliographystyle{acm} %
\bibliography{RandomRegularMajority}

\begin{thebibliography}{10}

\bibitem{aizenman1988metastability}
{\sc Aizenman, M., and Lebowitz, J.~L.}
\newblock Metastability effects in bootstrap percolation.
\newblock {\em Journal of Physics A: Mathematical and General 21}, 19 (1988),
  3801.

\bibitem{amini2013flooding}
{\sc Amini, H., Draief, M., and Lelarge, M.}
\newblock Flooding in weighted sparse random graphs.
\newblock {\em SIAM Journal on Discrete Mathematics 27}, 1 (2013), 1--26.

\bibitem{balogh2009majority}
{\sc Balogh, J., Bollob{\'a}s, B., and Morris, R.}
\newblock Majority bootstrap percolation on the hypercube.
\newblock {\em Combinatorics, Probability and Computing 18}, 1-2 (2009),
  17--51.

\bibitem{balogh2007bootstrap}
{\sc Balogh, J., and Pittel, B.~G.}
\newblock Bootstrap percolation on the random regular graph.
\newblock {\em Random Structures \& Algorithms 30}, 1-2 (2007), 257--286.

\bibitem{bender1978asymptotic}
{\sc Bender, E.~A., and Canfield, E.~R.}
\newblock The asymptotic number of labeled graphs with given degree sequences.
\newblock {\em Journal of Combinatorial Theory, Series A 24}, 3 (1978),
  296--307.

\bibitem{berger2001dynamic}
{\sc Berger, E.}
\newblock Dynamic monopolies of constant size.
\newblock {\em Journal of Combinatorial Theory, Series B 83}, 2 (2001),
  191--200.

\bibitem{bollobas1982diameter}
{\sc Bollob{\'a}s, B., and Fernandez de~la Vega, W.}
\newblock The diameter of random regular graphs.
\newblock {\em Combinatorica 2}, 2 (1982), 125--134.

\bibitem{de1992isotropic}
{\sc de~Oliveira, M.~J.}
\newblock Isotropic majority-vote model on a square lattice.
\newblock {\em Journal of Statistical Physics 66}, 1 (1992), 273--281.

\bibitem{feller1968introduction}
{\sc Feller, W.}
\newblock {\em An introduction to probability theory and its applications:
  volume I}, vol.~3.
\newblock John Wiley \& Sons New York, 1968.

\bibitem{flocchini2004dynamic}
{\sc Flocchini, P., Lodi, E., Luccio, F., Pagli, L., and Santoro, N.}
\newblock Dynamic monopolies in tori.
\newblock {\em Discrete applied mathematics 137}, 2 (2004), 197--212.

\bibitem{Fountoulakis2010}
{\sc Fountoulakis, N., and Panagiotou, K.}
\newblock {\em Rumor Spreading on Random Regular Graphs and Expanders}.
\newblock Springer Berlin Heidelberg, Berlin, Heidelberg, 2010, pp.~560--573.

\bibitem{frischknecht2013convergence}
{\sc Frischknecht, S., Keller, B., and Wattenhofer, R.}
\newblock Convergence in (social) influence networks.
\newblock In {\em International Symposium on Distributed Computing\/} (2013),
  Springer, pp.~433--446.

\bibitem{gartner2017color}
{\sc G{\"a}rtner, B., and Zehmakan, A.~N.}
\newblock Color war: Cellular automata with majority-rule.
\newblock In {\em International Conference on Language and Automata Theory and
  Applications\/} (2017), Springer, pp.~393--404.

\bibitem{goles1981comportement}
{\sc Goles, E., and Olivos, J.}
\newblock Comportement p{\'e}riodique des fonctions {\`a} seuil binaires et
  applications.
\newblock {\em Discrete Applied Mathematics 3}, 2 (1981), 93--105.

\bibitem{janson2011random}
{\sc Janson, S., Luczak, T., and Rucinski, A.}
\newblock {\em Random graphs}, vol.~45.
\newblock John Wiley \& Sons, 2011.

\bibitem{kaaser2015voting}
{\sc Kaaser, D., Mallmann-Trenn, F., and Natale, E.}
\newblock On the voting time of the deterministic majority process.
\newblock {\em arXiv preprint arXiv:1508.03519\/} (2015).

\bibitem{kempe2003maximizing}
{\sc Kempe, D., Kleinberg, J., and Tardos, {\'E}.}
\newblock Maximizing the spread of influence through a social network.
\newblock In {\em Proceedings of the ninth ACM SIGKDD international conference
  on Knowledge discovery and data mining\/} (2003), ACM, pp.~137--146.

\bibitem{land1995no}
{\sc Land, M., and Belew, R.~K.}
\newblock No perfect two-state cellular automata for density classification
  exists.
\newblock {\em Physical review letters 74}, 25 (1995), 5148.

\bibitem{mourrat2016phase}
{\sc Mourrat, J.-C., Valesin, D., et~al.}
\newblock Phase transition of the contact process on random regular graphs.
\newblock {\em Electronic Journal of Probability 21\/} (2016).

\bibitem{peleg1997local}
{\sc Peleg, D.}
\newblock Local majority voting, small coalitions and controlling monopolies in
  graphs: A review.
\newblock In {\em Proc. of 3rd Colloquium on Structural Information and
  Communication Complexity\/} (1997), pp.~152--169.

\bibitem{peleg2014immunity}
{\sc Peleg, D.}
\newblock Immunity against local influence.
\newblock In {\em Language, Culture, Computation. Computing-Theory and
  Technology}. Springer, 2014, pp.~168--179.

\bibitem{poljak1986pre}
{\sc Poljak, S., and Turz{\'\i}k, D.}
\newblock On pre-periods of discrete influence systems.
\newblock {\em Discrete Applied Mathematics 13}, 1 (1986), 33--39.

\bibitem{schonmann1990finite}
{\sc Schonmann, R.~H.}
\newblock Finite size scaling behavior of a biased majority rule cellular
  automaton.
\newblock {\em Physica A: Statistical Mechanics and its Applications 167}, 3
  (1990), 619--627.

\bibitem{spitzer1970interaction}
{\sc Spitzer, F.}
\newblock Interaction of markov processes.
\newblock {\em Advances in Mathematics 5}, 2 (1970), 246--290.

\bibitem{stefansson2015majority}
{\sc Stef{\'a}nsson, S.~{\"O}., and Vallier, T.}
\newblock Majority bootstrap percolation on the random graph g(n,p).
\newblock {\em arXiv preprint arXiv:1503.07029\/} (2015).

\end{thebibliography}
\end{document}